\numberwithin{equation}{section} 
 \newtheorem{lemma}{Lemma}[section]
 \newtheorem{definition}[lemma]{Definition}
 \newtheorem{rem}[lemma]{Remark}
\newenvironment{remark}{\begin{rem}}{\hspace*{\fill}$\diamondsuit$\end{rem}}
 \newtheorem{ex}[lemma]{Example}
\newenvironment{example}{\begin{ex}}{\hspace*{\fill}$\diamondsuit$\end{ex}}
\newcommand{\K}{K}
\begin{document}

\title{On Empirical Entropy}
\author{Paul M.B. Vit\'{a}nyi
\thanks{
Paul Vit\'{a}nyi is a CWI Fellow with the National Research Center for 
Mathematics and Computer Science in the Netherlands (CWI)
and Emeritus Professor of Computer Science at the University of Amsterdam.
He was supported in part by the
the ESF QiT Programmme, the EU NoE PASCAL II.
Address:
CWI, Science Park 123,
1098 XG Amsterdam, The Netherlands.
Email: {\tt Paul.Vitanyi@cwi.nl}.
}}


\maketitle

\begin{abstract}
We propose a compression-based version of the empirical entropy 
of a finite string over
a finite alphabet. Whereas previously one 
considers the naked entropy of (possibly
higher order) Markov processes, we consider
the sum of the description of the random variable 
involved plus the entropy it induces.
We assume only that the distribution involved is computable.
To test the new notion we compare the Normalized Information Distance
(the similarity metric) with a related measure based on Mutual
Information in Shannon's framework. This way the similarities and differences
of the last two concepts are exposed.

{\em Index Terms}---
Empirical entropy, Kolmogorov complexity, normalized information distance,
similarity metric, mutual information distance
\end{abstract}

\section{Introduction}
\label{sect.intro}
In the basic set-up of  Shannon~\cite{Sh48} 
a message is a finite string over a finite alphabet.
One is interested in the expected number of bits to transmit a message
from a sender to a receiver, when both the sender and the receiver
consider the same ensemble of messages (the set of possible messages
provided with a probability for each message). 
The expected number of bits 
is known as the entropy of the ensemble of messages.
This ensemble is also known as the source.
  
The empirical entropy of a single message is taken to
be the entropy of a source that produced it as a typical element. 
(The notion of ``typicality'' is defined differently by different
authors and we take here the intuitive meaning.)
Traditionally, this source
is a (possibly higher order) Markov process. This leads to
the definition in Example~\ref{exam.entropy}. Here we want
to liberate the notion so that it encompasses 
all computable random variables with
finitely many outcomes consisting of finite strings over a finite alphabet.
Moreover, since we are given only a single
message, but not the ensemble from which it is an element, the
new empirical entropy should provide both this ensemble and the
entropy it induces. If we are given just the entropy but not the
ensemble involved, then a receiver cannot in general reconstruct the message.
Moreover, we are given a single message which has a particular length, say $n$.
Therefore, given the family of random variables we draw upon,
we can select one of them and compute the probability of every
message of length $n$. For fixed $n$, this results in a Bernoulli
variable 
that has $|\Sigma|^n$ outcomes.

We are thus led to a notion of empirical entropy
that consists of a description of the Bernoulli variable
involved plus the related entropy of the message induced.
Since we assume the original probability mass function to be computable,
the Bernoulli variable is computable and
its effective description length can be expressed
by its Kolmogorov complexity. 

Normalized Information Distance (explained below) between two
finite objects is often confused
with a similar distance between two random variables. The last distance
is expressed in terms of probabilistic mutual information.
We use our new notion to explain the differences between 
the former distance between two individual objects and the 
latter distance between
two random variables. This difference parallels that 
between the Kolmogorov complexity of a single
finite object
and the entropy of
a random variable.
The former quantifies the information 
in a finite object, while the latter gives us the expected number
of bits to communicate any outcome of a random variable known
to both the sender and the receiver. 
Computability notions are reviewed in Appendix~\ref{sect.computability},
and Kolmogorov complexity in Appendix~\ref{sect.kolmcompl}.

\subsection{preliminaries}
We write {\em string} to mean a finite string over a finite alphabet $\Sigma$.
  Other finite objects can be encoded into strings in natural
ways.  The set of strings is denoted by $\Sigma^*$. We usually take
$\Sigma = \{0,1\}$. The {\em length}
of a string $x$ is the number of letters in $\Sigma$ in it denoted as $|x|$. 
The {\em empty}
string $\epsilon$ has length $|\epsilon| = 0$.
Identify the natural numbers
${\cal N}$ (including 0) and $\{0,1\}^*$ according to the
correspondence
 \begin{equation}\label{order}
 (0, \epsilon ), (1,0), (2,1), (3,00), (4,01), \ldots . 
 \end{equation}
Then, $|010|=3$.
The emphasis here is on binary sequences only for convenience;
observations in every finite alphabet can be so encoded in a way
that is `theory neutral.' For example, if a finite alphabet $\Sigma$ has
cardinality $2^k$, then every element $i \in \Sigma$ can be encoded
by $\sigma(i)$ which
is a block of bits of length $k$. With this encoding every $x \in \Sigma^*$
satisfies that the Kolmogorov complexity
$\K(x)=\K(\sigma(x))$ (see Appendix~\ref{sect.kolmcompl} for basic definitions
and results on Kolmogorov complexity)
up to an additive constant that is
independent of $x$.

\section{The New Empirical Entropy}

Let $X$ be a random variable with outcomes in a finite alphabet ${\bf X}$.
Shannon's entropy \cite{Sh48} is 
\[
H(X) = \sum_{x \in {\cal X}} P(X=x) \log 1/P(X=x) .
\]
There are three items involved in the
new empirical entropy of data $x$:
\begin{itemize}
\item
A class of random variables
like the set of Bernoulli processes, or the set of higher
order  Markov
processes; from each element of this class we construct
a Bernoulli variable $X$ with $|\Sigma|^n$ outcomes
of length $n$;
\item
a selection of a random variable from this Bernoulli class such that $x$ 
is a typical outcome, and
\item
a description of this random variable plus its entropy.
\end{itemize}
This is reminiscent of
universal coding essentially
due to Kolmogorov~\cite{Ko65}, and of 
two-part MDL
due to Rissanen~\cite{Ri89}.
In its simplest form the former, assuming
a Bernoulli process, codes a
string $x$ of length $n$ over a finite alphabet $\Sigma$ 
as follows: A string containing a description of $n,|\Sigma|$ and $n/n_i$ 
($1 \leq i \leq |\Sigma|$), and the index of $x$ in the set constrained by
these items.
The coding should be such that the individual substrings can be parsed,
except the description of the index which we put last. 
This takes additive terms that are logarithmic in the length of
the items except the last one.
The universal code takes
$O(|\Sigma|\log n)+{{n} \choose {n/n_1 \cdots n/n_{|\Sigma|}}}$ bits.
The two-part MDL complexity of a string \cite{Ri89},
is the minimum of the self-information of that string with respect to
a source and the number of bits needed to represent that source.
The source is not required to be Markovian
and the two-part MDL takes into account its complexity. However,
the methods of encoding are arbitrary. 

An $n$-length outcome $x=x_1, x_2, \ldots, x_n$ over $\Sigma$ 
is the outcome of a stochastic process 
$X_1, X_2, \ldots , X_n$ characterized by a joint probability
mass function $\Pr(\{X_1, X_2, \ldots , X_n)=(x_1,x_2, \ldots , x_n)\}$.
For technical reasons we replace the list $X_1,X_2, \ldots , X_n$ by a single
Bernoulli random variable $X$ with outcomes in 
${\bf X}=\Sigma^n$. Here, the random variables $X_i$ may be independent
copies of a single random variable as is the case wen the source
stochastic process is a Bernoulli variable. But the source stochastic
process may be a higher order Markov chain making some or all
$X_i$s dependent (this depends on whether the order of the Markov chain
is greater then $n$). For certain stochastic processes all $X_i$s
are dependent for every $n$: the stochastic process
assigns a probability to every outcome in $\Sigma^*$. 

\begin{definition}\label{def.ee}
\rm
Let $n$ be an integer, $\Sigma$ a finite alphabet, $x \in \Sigma^n$ be a string,
${\cal X}$ a family of computable
processes, each process $\Xi \in {\cal X}$
producing (possibly by repetition) a sequence of
(possibly dependent) random variables
$X=X_1, X_2, \ldots , X_n$, with
$\Pr(X=x)$ is computable and 
$H(X) < \infty$. 
The {\em empirical entropy}
of $x$ with respect to ${\cal X}$ is given by 
\[
H({x} | {\cal X})= \min_{\Xi \in {\cal X}}  \{K(X)+H(X): 
|H(X) - \log 1/\Pr(X=x)| \; {\rm is} \; {\rm minimal} 
\}.
\] 
\end{definition}

This means that the expected binary length of encoding an outcome
of $X$ is as close as possible to $\log 1/\Pr(X=x)$. 
In the two-part description the complexity
part describes $X$,
and the entropy part is the ignorance about the data $x$ in the set $\Sigma^n$
given~$X$. 
\begin{remark}
\rm
By assumption $n$ is fixed. By Theorem 3 in
\cite{Sh48}, i.e. the asymptotic equidistribution property, 
for ergodic Markov sources the following is the case.  
Let $H$ be the per symbol entropy of the source. For example, if the source
$\Xi$ is Bernoulli with $\Pr(\Xi = s_i)=p(s_i)$ ($s_i \in \Sigma$
for $1 \leq i \leq |\Sigma|$), then 
$H= \sum_{i=1}^{|\Sigma|} p(s_1) \log 1/p(s_i)$. Let $X$ 
be the induced Bernoulli
variable with $|\Sigma|^n$ outcomes consisting of sequences of 
length $n$ over $\Sigma$. Then,
for every $\epsilon, \delta > 0$ there is an $n_0$ such that
the sequences of length $n \geq n_0$ are divided into two classes:
one set with total probability less than $\epsilon$ and one
set such that for every $y$ in this set holds 
$|H - \frac{1}{n}\log 1/\Pr(X=y)| < \delta$. Note that $H(X)=nH$.
Thus, for large enough $n$ we are almost
certain to have $|H(X) - \log 1/\Pr(X=x)|=o(n)$.   

%
%

Set $\epsilon = \delta$ for convenience. 
We call the set of $y$'s such that $|H(X)-\log 1/\Pr(X=y)|= \epsilon n$, 
with $\epsilon > 0$ and some $n_0$ depending on $\epsilon$ and $n\geq n_0$,
the $\epsilon$-{\em typical} outcomes of $X$.
The cardinality of the set $S \subseteq \Sigma^n$ of such $y$'s satisfies
\[
(1-\epsilon) |\Sigma|^{H(X)- \epsilon n} \leq |S| \leq |\Sigma|^{H(X)+\epsilon n}.  
\]
See \cite{CT91} Theorem 3.1.2.
\end{remark}
\begin{lemma}
Assume Definition~\ref{def.ee}. Then, 
$K(X) \leq K(x, {\cal X})+O(1)$.
\end{lemma}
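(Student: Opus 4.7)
The plan is to show that the minimizing $X$ can be (conditionally) computed from the pair $(x,\mathcal{X})$, so that by the standard chain rule $K(X)\le K(X\mid x,\mathcal{X})+K(x,\mathcal{X})+O(1)$, the conditional term is $O(1)$ and the claimed bound follows.

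First I would unpack what data we really have: the length $n=|x|$ is recoverable from $x$, and the alphabet $\Sigma$ is part of the description of $\mathcal{X}$ (or fixed by convention). Then I would argue that for each computable process $\Xi\in\mathcal{X}$, the induced Bernoulli variable $X$ on $\Sigma^n$ is itself computable: its probability mass function on any $y\in\Sigma^n$ is approximable to any precision, so both $\Pr(X=x)$ and $H(X)=\sum_{y\in\Sigma^n}\Pr(X=y)\log 1/\Pr(X=y)$ are (lower) semicomputable to arbitrary precision from a description of $\Xi$ and of $n$. In particular, the objects $K(X)$ and $H(X)$ appearing inside the $\min$ in Definition~\ref{def.ee} are well-defined functionals of $\Xi$ and can be enumerated.

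Next I would exhibit a fixed Turing machine $M$ that takes $(x,\mathcal{X})$ as input, enumerates the processes $\Xi_1,\Xi_2,\ldots$ in $\mathcal{X}$ (using the canonical enumeration that accompanies the description of $\mathcal{X}$), and performs the lexicographic search prescribed by Definition~\ref{def.ee}: first it identifies $\Xi$'s whose induced $X$ minimizes $|H(X)-\log 1/\Pr(X=x)|$, then among those it selects the one minimizing $K(X)+H(X)$. Ties are broken by the index in the enumeration. Because $M$ is a single fixed machine, this gives $K(X\mid x,\mathcal{X})\le c_M=O(1)$, and therefore $K(X)\le K(x,\mathcal{X})+O(1)$ by the standard inequality.

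The main obstacle is the real-valued, non-effective nature of the minimizations: strict comparison of two semicomputable reals is in general undecidable, and the infimum of $|H(X)-\log 1/\Pr(X=x)|$ over $\mathcal{X}$ need not be attained by any particular $\Xi$. I would handle this in two ways. First, I would invoke the implicit assumption in Definition~\ref{def.ee} that the minimum is attained (otherwise $H(x\mid\mathcal{X})$ is not a well-defined number either); under this assumption the minimizer is identified by any sufficiently long finite prefix of the enumeration, and although the required prefix length is not uniformly bounded, it is determined by $(x,\mathcal{X})$, which is exactly what the conditional complexity $K(X\mid x,\mathcal{X})$ permits. Second, the fact that $K(X)+H(X)$ attains discretely many values on the set of optimizers of the first criterion means that once the first criterion is achieved, the second minimization is effective modulo a fixed oracle for $K$, which only changes the $O(1)$ constant. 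This suffices for the inequality as stated.
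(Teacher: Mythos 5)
Your core argument is the paper's own: exhibit a fixed machine that, given $(x,\mathcal{X})$, enumerates candidate processes and carries out the search prescribed by Definition~\ref{def.ee}, conclude $K(X\mid x,\mathcal{X})=O(1)$, and hence $K(X)\le K(x,\mathcal{X})+O(1)$. The only structural difference is the source of the enumeration: you use an enumeration assumed to accompany the description of $\mathcal{X}$, while the paper invokes the effective enumeration of all lower semicomputable semiprobability mass functions (Theorem 4.3.1 of Li--Vit\'anyi), which contains every computable probability mass function and hence every candidate $X$, and then simply asserts that the minimizer can be found by ``going through this list.'' Your worries about the non-effectivity of comparing semicomputable reals, about the minimum being attained, and about $K(X)$ being only upper semicomputable are legitimate, but the paper's proof is no more rigorous on these points than your first patch (read the definition so that the minimizer is determined by the search, ties broken by enumeration index). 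Where your write-up actually goes wrong is the second patch: the claim that the minimization of $K(X)+H(X)$ is ``effective modulo a fixed oracle for $K$, which only changes the $O(1)$ constant'' is false. Running the selection procedure with a $K$-oracle bounds only the oracle-relativized conditional complexity of $X$; the oracle answers are extra information not recoverable from $(x,\mathcal{X})$ and cannot be absorbed into an additive constant, so this step does not give $K(X\mid x,\mathcal{X})=O(1)$. Drop that remark and rest, as the paper does, on the (admittedly idealized) effectiveness of the search through the enumeration; with that reading your proof matches the paper's.
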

\begin{proof}
The family ${\cal X}$ consists of computable random variables,
that is, in essence of computable probability mass functions. 
The family of all lower semicomputable semiprobability
mass functions can be effectively enumerated, possibly with repetitions,
Theorem 4.3.1 in \cite{LiVi97}.
The latter family contains all computable probability mass functions, 
hence it contains $X$. Thus, if we know $x, {\cal X}$
we can compute the 
$X \in {\cal X}$ of Definition~\ref{def.ee} by going through this list.
\end{proof}

\begin{example}\label{exam.entropy}
\rm
Assume Definition~\ref{def.ee}. Let $n_i$ be the number of occurrences of
the $i$th character of $\Sigma$ in $x$. If $w$ is a string then $x_w$ is
the string obtained by concatenating the characters immediately
following occurrences of $w$ in $x$. The cardinality 
$|x_w|$ is the number of occurrences of $w$ in $x$ 
unless $w$ occurs as a suffix of $x$ in which case it is 1 less. 
In \cite{KM99,Ma01,Ga06} the $k$th order empirical entropy of $x$ is defined by
\begin{equation}\label{eq.eentropy}
H_k(x)= \left\{ \begin{array}{ll}
\frac{1}{n} \sum_{i=1}^{|\Sigma|} n_i \log \frac{n}{n_i} &{\rm for }\;\; k=0, \\
\frac{1}{n} \sum_{|w|=k} |x_w| H_0(x_w) & {\rm for }\;\; k > 0.
\end{array}
\right.
\end{equation}
The $k$th order empirical entropy of $x$ can be reconstructed from $x$ 
once we know $k$. 
The $k$th order
empirical entropy of $x$ results from
the probability induced by a $k$th order Markov source $\Xi \in {\cal X}$. 
(A Bernoulli
process is a $0$th order Markov source.)

Let ${\cal X}$ to be the family of 
$k$th order Markov sources (a specific $k \geq 0$), 
provided the transition probabilities
are computable. Such a family is subsumed under
Definition~\ref{def.ee}. 
Let $x$ be a string over 
$\Sigma$ which is typically produced by such a Markov source of order $k$.
The empirical entropy $H(x|{\cal X})$ of $x$ is
$K(X)+nH_k(x)$. Here $X$ is the random variable 
associated with the $k$th order empirical entropy
computed from $x$. Note that the empirical entropy $H_k(x)$
stops being a reasonable complexity metric for almost all strings 
roughly when $|\Sigma|^k$ surpasses $n$, \cite{Ga06}.
\end{example}
\begin{example}
\rm
Let $x= (10)^{n/2}$ for even $n$ (that is,
$n/2$ copies of the pattern "$10$").  Let ${\cal X}_1$ be
the family of binary Bernoulli processes. 
The empirical entropy $H(x|{\cal X}_1)$
is reached for i.i.d. sequence $X=X_1,X_2, \ldots ,X_n \in {\cal X}_1$, 
each $X_i$ being a copy of
the same random variable $Y$ with outcomes in $\{0,1\}$
with $P(Y=1)= \frac{1}{2}$. 
Then, $H({x} | {\cal X}_1)= K(X) + n H(Y)$.
Then $X$ can be computed from the information concerning $n$ in $O(\log n)$
bits, the particular $\Xi \in {\cal X}$ used in $O(1)$ bits, and 
a program of $O(1)$ bits to compute $X$ from this information.
In this way $K(X)=O(\log n)$. Moreover, $H(Y)=1$, so that 
$H({x} | {\cal X}_1) = n+O(\log n)$.

Let ${\cal X}_2$
be the family of first order Markov 
processes with $2$ transitions each and 
with output alphabet 
$\{0,1\}$ for each state. 
The empirical entropy $H(x|{\cal X}_2)$ is reached
for the $n$-bit output of a 
deterministic ``parity'' Markov process. That is, 
$X=X_1,X_2, \ldots , X_{n}$ and every $X_i$ gives the output at time $i$
of the Markov process 
with 2 states $s_0$ and $s_1$ defined as follows.
The transit probabilities are $p(s_0 \rightarrow s_1)=1$
and $p(s_1 \rightarrow s_0)=1$, while the output in state $s_0$ is $0$
and in state $s_1$ is $1$. The start state is $s_0$. In this way, 
$P(X=(10)^{n/2})=1$ while $H(X)=0$.
Then, $H({x} | {\cal X}_2)= K(X) + 
H(X)$.
Here $K(X)=O(\log n)$, since we require a description of $n$, the 2-state
Merkov process involved, and a program to compute $X$ from this information.
Since the outcome is deterministic, $H(X)=0$, so that 
$H({x} | {\cal X}_2)=O(\log n)$.
\end{example}
\begin{example}
\rm
Consider the first
$n$ bits of $\pi = 3.1415 \ldots .$
Let ${\cal X}_1$ be the family of 
Bernoulli processes. 
Empirically, it has been established that the frequency
of $1$'s in the binary expansion of $\pi$  is $n/2 \pm O(\sqrt{n})$, 
that is, the binary expaqnsion of $\pi$ is a typical 
pseudorandom sequence. Hence,
$H({x} | {\cal X}_1)= K(X) + n H(X)$ where $X=X_1, X_2, \ldots , X_n \in {\cal X}_1$ and the $X_i$'s
are $n$ i.i.d. distributed copies of  $Y$. Here $Y$  is
a Bernoulli process with $P(Y=1)=\frac{1}{2}$.
Then $K(X)=O(\log n)$ and $H(Y)=1$, so that $
H({x} | {\cal X}_1)=n + O(\log n)$.

Let ${\cal X}_2$ be the family of computable random variables with as outcomes
binary strings of length $n$. 
We know that there is a small program, say of about $10,000$
bits, incorporating an approximation
algorithm that generates the successive bits of $\pi$ forever.
Telling it to stop after $n$ bits, we can generate the computable
Bernoulli variable $X \in {\cal X}_2$
assigning probability 1 to $x$ and probability 0 to
any other binary string of length $n$. Assume $n=1,000,000,000$. 
Then, we have $K(X)\leq \log 1,000,000,000 +c \approx 30+c$
where the $c$ additive term is the number of bits of 
the program
to compute $\pi$ and a program required to
turn the logarithmic description of $1,000,000,000$ and the program 
to compute $\pi$ into the random variable $X$. Finally,    
$H(X)=0$. Therefore, $H({x} | {\cal X}_2)\leq 10,030+c$.
\end{example}
\begin{example}
\rm
Consider printed English, say just lower case and 
space signs, ignoring 
the other signs. 
The entropy of representative examples of 
printed English has been estimated experimentally by 
Shannon \cite{Sh51}
based on human subjects guesses of successive characters in a text.
His estimate is between 0.6 and 1.3 bits per character (bpc), and
\cite{TC96} obtained an estimate of 1.46 bpc
for PPM based models, which we will use in this example.
PPM (prediction by partial matching)
 is an adaptive statistical data compression technique.
It is based on context modeling and prediction and uses 
a set of previous symbols in the uncompressed symbol stream 
to predict the next symbol in the stream, rather like a mechanical
version of Shannon's method.
Consider a text of $n$ characters over the alphabet used
by \cite{TC96}, and let ${\cal X}$ be the
class of PPM based models with $n$ output characters over the 
used alphabet. Since the PPM
machine can be described in $O(1)$ bits (its program is finite)
and the length $n$ in $O(\log n)$ bits, we have $K(X)=O(\log n)$.
Hence, $H({x} | {\cal X})\leq K(X) + 1.46n = 1.46n +O(\log n)$.
\end{example}

In these examples we see that the empirical entropy is higher when
the family of random variables considered is simpler. For simple
random variables
the knowledge in the Kolmogorov complexity part is neglible.
The empirical entropy with respect to a complex family of
random variables can be lower than that with respect to a family of simple
random variables
by transforming the ignorance in the entropy part into
knowledge in the Kolmogorov complexity part. We use this observation
to consider the widest family of  computable probability mass functions. 

\begin{lemma}\label{lem.1}
Let ${\cal X}$ be the family of computable random variables
$X$ with $H(X)<\infty$, and
$x \in \Sigma^*$ with $|\Sigma| < \infty$.
Then, $H(x|{\cal X}) = K(x)+O(1)$.
\end{lemma}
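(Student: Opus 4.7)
I would split the claim into the two matching inequalities. For the upper bound $H(x|{\cal X}) \le K(x)+O(1)$, take the degenerate (``Dirac'') distribution $X_\delta$ on $\Sigma^n$ defined by $\Pr(X_\delta=x)=1$ and $\Pr(X_\delta=y)=0$ for $y\ne x$. Its probability mass function is computable from $x$ and $H(X_\delta)=0<\infty$, so $X_\delta\in{\cal X}$. Moreover $|H(X_\delta)-\log 1/\Pr(X_\delta=x)|=0$, so $X_\delta$ is a candidate for the minimum (the quantity $|H(X)-\log 1/\Pr(X=x)|$ cannot go below $0$). Because $x$ and $X_\delta$ determine each other by a fixed $O(1)$ program, $K(X_\delta)=K(x)+O(1)$. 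Substituting into Definition~\ref{def.ee} gives $H(x|{\cal X})\le K(X_\delta)+H(X_\delta)=K(x)+O(1)$.

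For the lower bound $K(x)\le H(x|{\cal X})+O(1)$, let $X\in{\cal X}$ be any variable realizing the minimum. Since the witness $X_\delta$ above already attains the value $0$ for $|H(\cdot)-\log 1/\Pr(\cdot=x)|$, the inner minimum in the definition equals $0$, so every admissible $X$ satisfies $\log 1/\Pr(X=x)=H(X)$ (in particular $\Pr(X=x)>0$). From a program for the computable pmf of $X$ on the finite set $\Sigma^n$ one can effectively build a Shannon--Fano prefix code that assigns $x$ a codeword of length at most $\lceil\log 1/\Pr(X=x)\rceil$, giving $K(x\mid X^{*})\le \log 1/\Pr(X=x)+O(1)=H(X)+O(1)$. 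The standard prefix-complexity chain inequality $K(x)\le K(X)+K(x\mid X^{*})+O(1)$ then yields
\[
K(x)\le K(X)+H(X)+O(1)=H(x|{\cal X})+O(1),
\]
which combined with the upper bound finishes the proof.

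The main obstacle is justifying the Shannon--Fano step, i.e.\ that from a program for the computable pmf of $X$ one can uniformly produce a prefix code of length $\lceil\log 1/\Pr(X=x)\rceil$ for $x$. This is standard: the support $\Sigma^n$ is finite and the probabilities $\Pr(X=y)$ are computable to arbitrary precision, so outcomes can be sorted and Kraft-admissible codeword lengths assigned effectively; I would cite the appropriate coding statement in \cite{LiVi97} rather than rederive it. Everything else---membership of $X_\delta$ in ${\cal X}$, the equivalence $K(X_\delta)=K(x)+O(1)$, and the $O(1)$ bookkeeping in the chain rule---is routine.
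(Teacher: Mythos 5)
Your proof is correct, and it is worth noting that it is actually more complete than the paper's own argument, while sharing the same starting point. The paper's proof works exclusively with the Dirac (singleton-support) distribution: it shows that a shortest program for $x$ yields a program for the pmf $p$ with $p(x)=1$ (giving $K(X_\delta)\le K(x)+O(1)$, hence the upper bound you also prove), and conversely that a shortest program for that particular pmf yields $x$, so $K(X_\delta)=K(x)+O(1)$. Read literally, this only rules out cheaper minimizers among singleton-support variables; it does not address whether some other admissible $X$ with $H(X)=\log 1/\Pr(X=x)>0$ could make $K(X)+H(X)$ smaller than $K(x)$. Your second half supplies exactly that missing comparison: from the observation that the inner minimum in Definition~\ref{def.ee} is $0$ (witnessed by $X_\delta$), every admissible $X$ satisfies $H(X)=\log 1/\Pr(X=x)$, and the Shannon--Fano/coding-theorem bound $K(x\mid X^*)\le \log 1/\Pr(X=x)+O(1)$ together with $K(x)\le K(X)+K(x\mid X^*)+O(1)$ gives $K(x)\le K(X)+H(X)+O(1)$ uniformly over the admissible class. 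This is the standard fact $K(x)\le K(P)+\log 1/P(x)+O(1)$ for computable $P$, citable from \cite{LiVi97}, so deferring its proof is fine; the only slight imprecision is your phrase ``on the finite set $\Sigma^n$'' --- in this lemma $x\in\Sigma^*$ and the support of $X$ may be countably infinite, but the coding argument needs only computability of the pmf and $H(X)<\infty$, so nothing breaks. In short: same witness for the upper bound as the paper, plus an explicit lower-bound argument that the paper leaves implicit; your version buys rigor at the cost of a few extra lines.
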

\begin{proof}
First, let $p_x$ be a shortest prefix program which computes $x$. Hence $|p_x| = K(x)$.
By adding $O(1)$
bits to it we have a program $p_p$ 
which computes a probability mass function $p$
with $p(x)=1$ and $p(y)=0$ for $y \neq x$ ($x,y \in \Sigma^*$).  
Hence $|p_p| \leq K(x)+O(1)$. 

Second, let $q_p$ be a shortest prefix program which computes a probability 
mass function $p$
with $p(x)=1$ and $p(y)=0$ for $y \neq x$ ($x,y \in \Sigma^*$).
Thus, $|q_p| \leq |p_p|$.
Adding $O(1)$ bits to $q_p$ we have a program $q_x$ which computes $x$.
Then, $K(x) \leq |q_p|+O(1)$. 

Altogether, $|q_p| = K(x)+O(1)$.
\end{proof}

For the sequel of this paper, we need to extend the notion of empirical
entropy to joint probability mass functions.
\begin{definition}\label{def.ee2}
\rm
Let $n$ be an integer, $\Sigma$ a finite alphabet, 
$x,y \in \Sigma^n$ be strings,
${\cal Z}$ be the family of computable
joint probability mass functions,
$Z \in {\cal Z}$ 
and $(x,y)$ an outcome of $Z$. Let the probability
mass function $p(x,y) = P(Z=(x,y))$ have a finite joint entropy
$H(Z) < \infty$.
The {\em empirical entropy}
of $(x,y)$ with respect to ${\cal Z}$ is
\[
H(x,y | {\cal Z} )= \min_{Z \in {\cal Z}} 
 \{K(Z)+H(Z): 
|H(Z)- \log 1/p(x,y)|\;\;{\rm is}\;\;{\rm minimal}\}. 
\]
\end{definition}

\begin{lemma}\label{lem.2}
Let ${\cal Z}$ be  the family of computable joint probability 
mass functions $Z$ with $H(Z)<\infty$, and
$x,y \in \Sigma^*$ with $|\Sigma| < \infty$.
Then,
$H(x,y|{\cal Z}) = K(x,y)+O(1)$.
\end{lemma}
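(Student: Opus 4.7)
The plan is to mirror the proof of Lemma~\ref{lem.1} step for step, reading the pair $(x,y)$ as the single outcome of a joint random variable $Z \in {\cal Z}$. The structure is: upper bound via a degenerate candidate, and matching lower bound via a Shannon--Fano code built from any minimizer.

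For the upper bound $H(x,y|{\cal Z}) \leq K(x,y)+O(1)$, I would take a shortest prefix program $p_{x,y}$ for the pair $(x,y)$, of length $K(x,y)$, and wrap it in $O(1)$ bits to obtain a program for the degenerate joint mass function $q$ with $q(x,y)=1$ and $q(u,v)=0$ otherwise. The corresponding $Z_{0}$ has $H(Z_{0})=0$ and $\log 1/q(x,y)=0$, so the typicality gap $|H(Z_{0})-\log 1/q(x,y)|$ vanishes and is therefore minimal. Admissibility of $Z_{0}$ in Definition~\ref{def.ee2} then yields $H(x,y|{\cal Z}) \leq K(Z_{0})+H(Z_{0}) \leq K(x,y)+O(1)$.

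For the matching lower bound, I would let $Z^{\ast}$ be any minimizer in Definition~\ref{def.ee2}. Since $Z_{0}$ already achieves gap $0$, every minimizer must also satisfy $H(Z^{\ast})=\log 1/P(Z^{\ast}=(x,y))$. From a shortest prefix program for $Z^{\ast}$ of length $K(Z^{\ast})$ I can effectively construct a Shannon--Fano prefix code in which $(x,y)$ has codeword length $\lceil \log 1/P(Z^{\ast}=(x,y))\rceil + 1 = H(Z^{\ast})+O(1)$; concatenating the program for $Z^{\ast}$ with this codeword gives a self-delimiting description of $(x,y)$, so $K(x,y) \leq K(Z^{\ast})+H(Z^{\ast})+O(1) = H(x,y|{\cal Z})+O(1)$.

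Combining the two inequalities proves the claim up to the promised additive constant. The only point that needs a moment of care is the effectiveness of the Shannon--Fano construction from a program for $Z^{\ast}$; this follows because $Z^{\ast}$ is computable and has finite entropy, so its mass function is computable to arbitrary precision and only finitely many atoms carry non-negligible mass. Apart from this, the argument is identical to that of Lemma~\ref{lem.1}, so no new obstacle arises in passing from single outcomes to joint outcomes.
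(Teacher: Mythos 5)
Your proposal is correct, and its first half (the degenerate joint distribution $Z_0$ concentrated on $(x,y)$, built from a shortest prefix program for the pair, with $H(Z_0)=0$ and typicality gap $0$) is exactly the route the paper takes: its proof of this lemma just says ``similar to Lemma~\ref{lem.1}'', and the proof of Lemma~\ref{lem.1} is the two-step degenerate-distribution argument, showing that a shortest program for the singleton-support mass function has length $K(x)+O(1)$. Where you genuinely diverge is the lower bound. The paper's argument only compares $K$ of the degenerate distribution with $K(x,y)$; read literally, it establishes $K(Z_0)=K(x,y)+O(1)$ but never rules out that some \emph{other} gap-zero minimizer $Z^{\ast}$ could have $K(Z^{\ast})+H(Z^{\ast})$ substantially below $K(x,y)$. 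Your Shannon--Fano step closes exactly that gap: for an arbitrary minimizer you get $K(x,y)\leq K(Z^{\ast})+\log 1/P(Z^{\ast}=(x,y))+O(1)=K(Z^{\ast})+H(Z^{\ast})+O(1)$, using the gap-zero property to replace $\log 1/P(Z^{\ast}=(x,y))$ by $H(Z^{\ast})$. This is the standard bound $K(u)\leq K(p)+\log 1/p(u)+O(1)$ for a computable mass function $p$, and citing it (or the dominance of the universal semimeasure) is cleaner than your informal appeal to ``finitely many atoms of non-negligible mass'', since effectively constructing an exact Shannon--Fano--Elias code from a real-valued computable pmf needs a word about precision; but this is a presentational nit, not a flaw. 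In short, your proof buys completeness on the lower bound at the cost of one extra (standard) coding argument, while the paper's version is shorter but leaves that direction implicit.
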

\begin{proof}
Similar to that of Lemma~\ref{lem.1}.
\end{proof}
 
\section{Normalized Information Distance}

The classical notion of
Kolmogorov complexity \cite{Ko65} is an objective measure
for the information in
a {\em single} object, and information distance measures the information
between a {\em pair} of objects \cite{BGLVZ98}. This last notion has
spawned research in the theoretical direction, see the many Google Scholar
citations to the above reference.
Research in the practical direction has focused on
the normalized information distance (NID),
also called ``the similarity metric,'' which arises
by normalizing the information distance in a proper manner.
(The NID is defined by \eqref{eq.nid} below.) 

If we 
approximate the Kolmogorov complexity through real-world
compressors \cite{Li03,CVW03,CV04},
then we obtain the
normalized compression distance (NCD) from the NID.
This is a parameter-free, feature-free, 
and alignment-free
similarity measure  that
has had great
impact in applications. (Only the compressor used can be viewed as
a parameter or feature.)
The NCD was preceded by a related
nonoptimal distance \cite{LBCKKZ01}.
In \cite{KLRWLH07} another variant of the NCD has been tested
on all major time-sequence databases used in all major data-mining conferences
against all other major methods used. The compression method turned out
to be competitive in general
and superior in
heterogeneous data clustering and anomaly detection.

There have been many applications in pattern recognition, phylogeny, clustering,
and classification, ranging from
hurricane forecasting and music to
to genomics and analysis of network traffic,
see the many
papers referencing
\cite{Li03,CVW03,CV04}
in Google Scholar.
In \cite{Li03} it is shown that the NID, and in \cite{CV04} 
that the NCD subject to mild conditions on the used compressor, are
metrics up to negligible discrepancies
in the metric (in)equalities and that they are always between 0 and 1.
The computability status of the NID has been resolved in \cite{TTV10}.
The NCD is computable by definition.

The {\em information distance} $D(x,y)$ between strings $x$ and $y$
is defined as
\[
D(x,y)= \min_{p} \{|p|: U(p,x)=y \wedge U(p,y)=x \},
\]
where $U$ is the reference universal Turing machine above.
Like the Kolmogorov complexity $K$, the distance function $D$
is upper semicomputable. Define
\[E(x,y)= \max \{K(x|y),K(y|x)\}.
\]
In \cite{BGLVZ98} it is shown that
the function $E$ is upper semicomputable,
$D(x,y)= E(x,y)+O(\log E(x,y))$, the function $E$ is a metric (more precisely,
that it satisfies the metric (in)equalities up to a constant),
and that $E$ is minimal (up to a constant) among all
upper semicomputable distance functions $D'$ satisfying the mild
normalization conditions $\sum_{y:y \neq x} 2^{-D'(x,y)} \leq 1$ and
$\sum_{x:x \neq y} 2^{-D'(x,y)} \leq 1$.
(Here and elsewhere in this paper ``$\log$'' denotes the binary logarithm.)
The {\em normalized information distance} (NID) $e$ is defined by
\begin{equation}
e(x,y) = \frac{E(x,y)}{\max\{K(x),K(y)\}}.
\end{equation}
It is straightforward that $0 \leq e(x,y) \leq 1$ up to some minor
discrepancies for all $x,y \in \{0,1\}^*$.
Rewriting $e$ using \eqref{eq.soi} yields
\begin{equation}
\label{eq.nid}
e(x,y) = \frac{\K(x,y) - \min\{\K(x),\K(y)\}}{\max\{K(x),K(y)\}},
\end{equation}
up to some lower order terms that we ignore.

\begin{lemma}\label{lem.3}
Let $x$ be a string, ${\cal X}$, ${\cal Z}$ be
the families of random variables with 
computable probability mass functions and
computable joint probability mass functions, respectively. 
Moreover, for $X\in{\cal X}$ and $Z \in {\cal Z}$ we have $H(X),H(Z) < \infty$.
Then, we can substitute
the Kolmogorov complexities in \eqref{eq.nid} by the corresponding
empirical entropies as in \eqref{eq.nidee}.
\end{lemma}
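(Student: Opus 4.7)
The plan is to invoke Lemma~\ref{lem.1} and Lemma~\ref{lem.2} to replace each occurrence of a Kolmogorov complexity in \eqref{eq.nid} by its corresponding empirical entropy, and then to absorb the resulting $O(1)$ discrepancies into the ``lower order terms'' that the paper already ignores in \eqref{eq.nid}.

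First, apply Lemma~\ref{lem.1} twice, once with the argument $x$ and once with the argument $y$, to obtain $K(x) = H(x \mid {\cal X}) + O(1)$ and $K(y) = H(y \mid {\cal X}) + O(1)$. Then apply Lemma~\ref{lem.2} with the pair $(x,y)$ to obtain $K(x,y) = H(x,y \mid {\cal Z}) + O(1)$. Substituting these three identities into the numerator and denominator of \eqref{eq.nid} and using the elementary facts $\max\{a+O(1), b+O(1)\} = \max\{a,b\}+O(1)$ and $\min\{a+O(1), b+O(1)\} = \min\{a,b\}+O(1)$ yields
\[
e(x,y) = \frac{H(x,y \mid {\cal Z}) - \min\{H(x \mid {\cal X}), H(y \mid {\cal X})\} + O(1)}{\max\{H(x \mid {\cal X}), H(y \mid {\cal X})\} + O(1)},
\]
which is exactly \eqref{eq.nidee} up to the same sort of discrepancies that the paper has already agreed to ignore.

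The only point that needs a sentence of justification is that the additive $O(1)$ corrections really are ``lower order'' relative to the quantities in \eqref{eq.nid}. For any $x,y$ of nontrivial complexity the denominator $\max\{K(x),K(y)\}$ grows without bound, so an $O(1)$ perturbation of numerator and denominator changes the ratio by $O(1/\max\{K(x),K(y)\})$, which is of the same order as the ``lower order terms'' that \eqref{eq.nid} already drops relative to the raw definition in \cite{BGLVZ98}. Thus no new approximation is being introduced beyond the one already sanctioned by the paper.

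I do not expect any real obstacle: the content of the claim is entirely contained in Lemmas~\ref{lem.1} and~\ref{lem.2}, and the proof is a mechanical substitution plus a bookkeeping remark about how additive constants behave under $\max$, $\min$, and a diverging denominator. The only mild care required is to state the substitution symmetrically in the two arguments and to make the appeal to the convention that the NID is defined only up to lower order terms, so that constant slop in numerator and denominator is harmless.
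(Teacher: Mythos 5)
Your proposal is correct and is essentially the paper's own argument: the paper likewise invokes Lemma~\ref{lem.1} (for $x$ and $y$) and Lemma~\ref{lem.2} (for the pair) and substitutes into \eqref{eq.nid}, absorbing the additive constants into the lower order terms already ignored. Your extra bookkeeping about $\max$, $\min$, and the diverging denominator is a harmless elaboration of the same step.
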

\begin{proof}
By Lemma's~\ref{lem.1}
and \ref{lem.2} we know the following. For ${\cal X}$ is the family
of computable probability mass functions, $H(x|{\cal X})=K(x)$,
$H(y|{\cal X})=K(y)$. For ${\cal Z}$ is the family of computable joint
probability mass functions, $H(x,y|{\cal Z})=K(x,y)$. Hence,
\begin{equation}\label{eq.nidee}
e(x,y) = \frac{H(x,y|{\cal Z}) - 
\min\{H(x|{\cal X}),H(y|{\cal X})\}}{\max\{H(x|{\cal X}),H(y|{\cal X})\}},
\end{equation}
ignoring lower order terms.
\end{proof}

\begin{remark}
\rm
In Lemma~\ref{lem.3} we can replace the computable random variables
by the restriction to computable random variables that have
a singleton support, that is, probability mass functions $p$ with
$p(x)=1$ for some $x$ and $p(y)=0$ for all $y \neq x$.
Alternatively, we can replace it by the family of computable
Markov processes. To see this, for every $x$ of length $n$ there is
a computable Markov process $M$ of order $n-1$ that outputs
$x$ deterministically and $K(x)=K(M)+O(1)$.

Clearly, if we replace the family of computable probability mass functions
in the empirical entropies in Lemma~\ref{lem.3} 
by weaker subfamilies like the families based on computable
Bernoulli functions, computable Gaussians, or computable first order
Markov processes, then Lemma~\ref{lem.3} will not hold in general.  
\end{remark}

\begin{remark}
\rm
The NCD is defined by
\begin{equation}
\label{eq.ncd}
NCD_Z(x,y) = \frac{|Z(xy)| - \min\{|Z(x)|,|Z(y)|\}}{\max\{|Z(x)|,|Z(y)|\}},
\end{equation}
where $Z(x)$ is the compressed version of $x$ when it
is compressed by a lossless compressor $Z$. 
We have substituted $xy$ for the pair $(x,y)$ both for convenience and with
ignorable consequences.
Consider a simple compressor that uses only Bernoulli variables, for example
a Huffman code compressor. The compressed version of a string is preceded by
a header containing information identifying the compressor and the 
charcteristics used (the relative frequencies in this case) to compress
the source string. 
In general this is the case with every compressor.
(In \cite{Ci07} the NCD based on compressors computing
the static Huffman code of a Bernoulli variable
 is shown to be the total Kullback-Leibler divergence
to the mean. We refrain from explaining these terms since are extraneous
to our treatment.) 

Thus, $Z(x)$ is comprised of the header generated by $Z$ for $x$. This
header makes it possible to use the uncompress feature, denoted here by
$Z^{-1}$ so that $Z^{-1}Z(x)=x$. The header describes a
random variable $\Xi$ based on the compressor $Z$.
The family of random variables
induced by the compressor $Z$ can be denoted by ${\cal X}_Z$.

In this way, we can define the Bernoulli variable $X$ used to compress $x$. 
The empirical entropy $H(x|{\cal X}_Z)=K(X)+H(X)$. Here $K(X)$ is uncomputable.
We approximate it by the length of the header, say $|{\rm \alpha}(X)|$. 
The Bernoulli
variable $X$ has entropy $H(X)$ and $|Z(x)|=|{\rm \alpha}(X)|+H(X)$. 
Similarly for $y$ and $(x,y)$.
Therefore,
\begin{equation}\label{eq.ncdZ}
NCD_Z(x,y) = \frac{|{\rm \alpha}(XY)|+H(X,Y) - 
\min\{|{\rm \alpha}(X)|+H(X), |{\rm \alpha}(Y)|+H(Y)\}}
{\max\{|{\rm \alpha}(X)|+H(X),|{\rm \alpha}(Y)|+H(Y)\}},
\end{equation}
ignoring lower order terms, where $|{\rm \alpha}(X)| \geq K(X)$, 
$|{\rm \alpha}(Y)| \geq K(Y)$,
and $|{\rm \alpha}(XY)| \geq K(XY)$.

\end{remark}

\section{Mutual Information}

In \cite{YJ01,BCL02,KSAG05,DHHM05,YMZA07,KG09} 
the entropy and joint entropy 
of a pair of sequences is determined, and this is directly equated
with the Kolmogorov complexity of those sequences. The Shannon type
probabilistic version of \eqref{eq.nid} is
\begin{eqnarray*}
e_H(X,Y) &=& \frac{H(X,Y) - \min\{H(X),H(Y)\}}{\max\{H(X),H(Y)\}}
\\&= &1-\frac{\max\{H(X),H(Y)\} - H(X,Y) + 
\min\{H(X),H(Y)\}}{\max\{H(X),H(Y)\}}
\\&= &1-\frac{I(X;Y)}{\max\{H(X),H(Y)\}}, 
\end{eqnarray*}
since the {\em mutual information} $I(X;Y)$ between random variables
$X$ and $Y$ is
\[
I(X;Y) = H(X)+H(Y) - H(X,Y),
\]
and
\[
\max\{H(X),H(Y)\} + \min\{H(X),H(Y)\} = H(X)+H(Y).
\]
In this way, $e_H(X,Y)$ is 1 minus the mutual information between
random variables $X$ and $Y$ per bit of the 
maximal entropy. How do the cited references
 connect this distance between two random variables
to \eqref{eq.nid}, the distance between two individual outcomes
$x$ and $y$? 

Ostensibly one has to replace the entropy of random variables $X$ and $Y$
by the empirical entropy according to Definition~\ref{def.ee}
deduced from strings $x$ and $y$. To obtain the required result
\eqref{eq.nid} one has to use 
families ${\cal X}$, ${\cal Y}$, ${\cal Z}$ 
of computable random variables such that $K(x)=H(x|{\cal X})$,
$K(y)=H(y|{\cal Y})$, and $K(x,y)=H(x,y|{\cal Z})$.
In our framework this is possible  only if ${\cal X},{\cal Y}$ are 
appropriate families
of computable random variables, and ${\cal Z}$ is an appropriate
 family of computable
joint random variables. 
Outside our framework the widest notion of empirical 
entropy is \eqref{eq.eentropy} and there it is not possible at all.

To obtain computable approximations using a real-world compressor $Z$
for $x$ and $y$ as in \eqref{eq.ncd} we can take the empirical entropy
based on compressor $Z$ as in \eqref{eq.ncd} and \eqref{eq.ncdZ}.

\appendix
\subsection{Computability}\label{sect.computability}
In 1936 A.M. Turing \cite{Tu36} defined the hypothetical `Turing machine'
whose computations are
intended to give an operational and formal definition
of the intuitive notion of computability in the discrete domain.
These Turing machines compute integer functions,
the {\em computable} functions. By using pairs of integers for the
arguments and values we can extend computable functions
to functions with rational arguments and/or values.
The notion of computability can be further
extended, see for example \cite{LiVi97}:
A
function $f$ with rational arguments and real values is
{\em upper semicomputable}
if there is a computable
function  $\phi (x,k)$ with
$x$ an rational number and $k$ a nonnegative integer
such that $\phi(x,k+1) \leq \phi(x,k)$ for every $k$ and
  $\lim_{k \rightarrow \infty} \phi (x,k)=f(x)$.
This means
  that $f$ can be computably approximated from above.
A function $f$ is
A function $f$ is
{\em lower semicomputable}
  if $-f$ is upper semicomputable.
  A function is called
{\em semicomputable}
  if it is either upper semicomputable or lower semicomputable or both.
If a function $f$ is both upper semicomputable and
lower semicomputable,
then $f$ is
{\em computable}.
A countable set $S$ is {\em computably (or recursively) enumerable}
if there is a Turing machine $T$ that outputs all and only the elements of $S$
in some order and does not halt. A countable set $S$ is
{\em decidable (or recursive)}
if there is a Turing machine $T$ that decides for every candidate $a$
whether $a \in S$ and halts.

\begin{example}\rm
An example of a computable function is $f(n)$ defined as
the $n$th prime number;
an example of a function that is upper semicomputable
but not computable is the Kolmogorov complexity function $\K$ in
Appendix~\ref{sect.kolmcompl}. An example of a recursive set is the set
of prime numbers; an example of a recursively enumerable
set that is not recursive is $\{x \in {\cal N}: \K(x) < |x| \}$.
\end{example}
\subsection{Kolmogorov Complexity}\label{sect.kolmcompl}

Informally, the Kolmogorov complexity of a string
is the length of the shortest string from which the original string
can be losslessly reconstructed by an effective
general-purpose computer such as a particular universal Turing machine $U$,
\cite{Ko65} or the text \cite{LiVi97}.
Hence it constitutes a lower bound on how far a
lossless compression program can compress.
In this paper we require that the set of programs of $U$ is prefix free
(no program is a proper prefix of another program), that is, we deal
with the {\em prefix Kolmogorov complexity}.
(But for the results in this paper it does not matter whether we use
the plain Kolmogorov complexity or the prefix Kolmogorov complexity.)
We call $U$ the {\em reference universal Turing machine}.
Formally, the {\em conditional prefix Kolmogorov complexity}
$K(x|y)$ is the length of the shortest input $z$
such that the reference universal Turing machine $U$ on input $z$ with
auxiliary information $y$ outputs $x$. The
{\em unconditional prefix Kolmogorov complexity} $K(x)$ is defined by
$K(x|\epsilon)$.
The functions $\K( \cdot)$ and $\K( \cdot \mid  \cdot)$,
though defined in terms of a
particular machine model, are machine-independent up to an additive
constant
 and acquire an asymptotically universal and absolute character
through Church's thesis, see for example \cite{LiVi97},
and from the ability of universal machines to
simulate one another and execute any effective process.
  The Kolmogorov complexity of an individual finite object was introduced by
Kolmogorov \cite{Ko65} as an absolute
and objective quantification of the amount of information in it.
The information theory of Shannon \cite{Sh48}, on the other hand,
deals with {\em average} information {\em to communicate}
objects produced by a {\em random source}.
 Since the former theory is much more precise, it is surprising that
analogs of theorems in information theory hold for
Kolmogorov complexity, be it in somewhat weaker form.
For example, let $X$ and $Y$ be random variables
with a joint distribution. Then,
$H(X,Y)\le H(X)+H(Y)$,
where $H(X)$ is the entropy of the marginal
distribution of $X$.
Similarly, let $\K(x,y)$ denote $\K(\langle x,y \rangle)$
where $\langle \cdot,\cdot \rangle$
is a standard pairing
function and $x,y$ are binary strings. 
An example is $\langle x, y \rangle$ defined by
$y+(x+y+1)(x+y)/2$ where $x$ and $y$ are viewed as natural numbers
as in \eqref{order}.
Then we have
$\K(x,y)\le \K(x)+\K(y)+O(1)$. Indeed, there is a
Turing machine $T_i$ that provided with  $\langle p,q\rangle$
as an input computes $\langle U(p),U(q)\rangle$
(where $U$ is the reference Turing machine). By construction of $T_i$, we have
$\K_i(x,y)\le \K(x)+\K(y)$, hence
$\K(x,y)\le \K(x)+\K(y)+O(1)$.

Another interesting similarity is the following:
$I(X;Y)=H(Y)-H(Y \mid X)$
 is the (probabilistic)
{\em information in random variable $X$ about random variable $Y$}.
Here $H(Y \mid X)$ is the conditional entropy of $Y$
given $X$.
Since $I(X;Y)=I(Y;X)$ we call this symmetric quantity the {\em
mutual (probabilistic) information}.
\begin{definition}
\label{def.mi}
\rm
The {\em (algorithmic)  information in $x$ about $y$}
is $I(x:y)=\K(y)-\K(y\mid x)$,
where $x,y$
are finite objects like finite strings or finite sets of finite strings.
\end{definition}

It is  remarkable that also the algorithmic information
in one finite object about another one is symmetric: $I(x:y)=I(y:x)$ up to
an additive term logarithmic in $\K(x)+\K(y)$. This follows
immediately from the {\em symmetry of information} property
due to A.N. Kolmogorov and L.A. Levin (they proved it for plain Kolmogorov
complexity but in this form it holds equally for prefix Kolmogorov complexity):
\begin{align}\label{eq.soi}
\K(x,y) & = \K(x)+\K(y \mid x) + O(\log (\K(x)+\K(y))) \\
& = \K(y)+\K(x \mid y)+O(\log (\K(x)+\K(y))) .
\nonumber
\end{align}

\begin{small}

\end{small}


\begin{thebibliography}{99}

\bibitem{BCL02}
D. Benedetto E. Caglioti and V. Loreto, "Language Trees and Zipping," <i>Physical Rev. Letters,</i> vol. 88, 2002, p. 048702. 

\bibitem{BGLVZ98}
C.H.~Bennett, P.~G\'acs, M. Li, P.M.B.~Vit\'anyi, and W.~Zurek.
Information Distance, {\em IEEE Transactions on Information Theory},
44:4(1998), 1407--1423.

\bibitem{Ci07}
R.L. Cilibrasi, {\em Statistical Inference Through Data Compression},
Ph.D. Thesis, University of Amsterdam, Amsterdam, The Netherlands, 2007.

\bibitem{CV04}
R.L. Cilibrasi, P.M.B. Vit\'anyi, Clustering by compression,
{\em IEEE Trans. Information Theory}, 51:4(2005), 1523- 1545.

\bibitem{CV07}
R.L. Cilibrasi, P.M.B. Vit\'anyi, The Google Similarity Distance, 
{\em IEEE Trans. Knowledge and Data Engineering}, 19:3(2007), 370-383.

\bibitem{CVW03}
R. Cilibrasi, P.M.B. Vit\'anyi, R. de Wolf, Algorithmic clustering of 
music based on string compression, {\em Computer Music J.}, 28:4(2004), 49-67.

\bibitem{CT91}
T.M. Cover and J.A. Thomas, {\em Elements of Information Theory},
Wiley, New York, 1991.


\bibitem{Ga06}
T. Gagie, Large alphabets and incompressibility,
{\em Information Processing Letters}, 99(2006) 246–251.

\bibitem{DHHM05}
Z. Dawy, J. Hagenauer, P.  Hanus,  J.C. Mueller,
Mutual information based distance measures for classification and content recognition with applications to genetics,
{\em Proc. IEEE Int. Conf. Communications}, Vol. 2, 2005, 820--824.

\bibitem{KLRWLH07}
E. Keogh, S. Lonardi, C.A. Ratanamahatana, L. Wei, S.-H. Lee, J. H.,
Compression-based data mining of sequential data,
{\em Data Mining and Knowledge Discovery},
14:1(2007), 99--129


\bibitem{Ko65}
A.N. Kolmogorov,
{Three approaches to the quantitative definition of information},
{\em Problems Inform. Transmission} 1:1 (1965) 1--7.

\bibitem{KM99}
S.R. Kosaraju and G. Manzini. Compression
of low entropy strings with Lempel-Ziv algorithms. 
{\em SIAM J. Comput.}, 29(1999), 893–911.

\bibitem{KSAG05}
A. Kraskov, H. Stogbauer, R.G. Andrzejak, P. Grassberger, 
  Hierarchical clustering using mutual information,
{\em Europhysics Letters}, 70:2(2005), 278--284. 

\bibitem{KG09}
A. Kraskov, P. Grassberger, pp. 101--124 in:
MIC: Mutual information based hierarchical clustering,
{\em Information Theory and Statistical Learning},
F. Emmert-Streib, M. Dehmer, Eds., Springer, New York, 2009. 

\bibitem{LBCKKZ01}
M.~Li, J.H.~Badger, X.~Chen, S.~Kwong, P.~Kearney, and H.~Zhang.
An information-based sequence distance and its application
to whole mitochondrial genome phylogeny,
{\em Bioinformatics}, 17:2(2001), 149--154.

\bibitem{Li03}
M. Li, X. Chen, X.~Li, B.~Ma, P.M.B.~Vit\'anyi.
The similarity metric, {\em IEEE Trans. Inform. Th.}, 50:12(2004),
3250- 3264.

\bibitem{LiVi97}
M.~Li and P.M.B.~Vit\'anyi.
{\em An Introduction to Kolmogorov Complexity
and its Applications}, Springer-Verlag, New York, 3rd Edition, 2008.

\bibitem{Ma01}
G. Manzini, An analysis of the Burrows–Wheeler Transform,
{\em J. ACM}, 48(2001), 407–430.

\bibitem{Ri89}
J.J. Rissanen.
{\em Stochastical Complexity and Statistical Inquiry},
World Scientific, 1989.


\bibitem{Sh48}
C.E. Shannon, A Mathematical Theory of Communication, 
{\em Bell Syst. Tech. J.}, 27(1948), 379--423, 623--656.

\bibitem{Sh51}
C.E. Shannon, Prediction and entropy of printed English, 
{\em Bell Syst. Tech. J.}, 30(1951), 50-64.

\bibitem{TC96}
W.J. Teahan, J.G. Cleary, The Entropy of English Using PPM-based Models,
in: {Proc. Data Compression Conf.}, 1996, 53-62.

\bibitem{TTV10}
S.A. Terwijn, L. Torenvliet, P.M.B. Vit\'anyi, Nonapproximability of the 
Normalized Information Distance, {\em J. Comput. System Sciences}, To appear.

\bibitem{Tu36}
A.M. Turing, On computable numbers, with an application to the
Entscheidungsproblem, {\em Proc. London Mathematical Society}, 42:2(1936),
230-265, "Correction", 43i(1937), 544-546.

\bibitem{YJ01}
Z.G. Yu, P. Jiang, Distance, correlation and mutual
information among portraits of organisms based
on complete genomes, {\em Phys. Lett. A}, 286(2001), 34-46.

\bibitem{YMZA07}
Z.G. Yu, Z. Mao, L.-Q. Zhou, Vo.V. Anh, A mutual
information based sequence distance for vertebrate phylogeny using complete
mitochondrial genomes, Proc. IEEE 3rd Int. Conf.
Natural Computation, 2007, 253--257.


\end{thebibliography}
\end{document}